\newcommand{\Hi}{\mathcal{H}}
\newcommand{\Tr}{\mathrm{Tr}}
\newcommand{\ket}[1]{| #1 \rangle}
\newcommand{\bra}[1]{\langle #1 |}
\newcommand{\braket}[2]{\langle #1 \vert #2 \rangle}
\newtheorem{Theorem}{Theorem}
\newtheorem{Definition}{Definition}
\newtheorem{Lemma}{Lemma}
\def\QED{\mbox{\rule[0pt]{1.5ex}{1.5ex}}}
\begin{document}

\title{The geometric measure of entanglement for a symmetric pure state with non-negative amplitudes}
\author{Masahito Hayashi $^{1}$, Damian Markham $^{2}$, Mio Murao $^{3,4}$, Masaki Owari $^{5}$, and Shashank Virmani $^{5,6}$}
\affiliation{$^1$ Graduate School of Information Sciences, Tohoku University, Aoba-ku, Sendai 980-8579, Japan \\
$^2$ LTCI CNRS, TELECOM ParisTech, 37/39 rue Dareau, 75014 Paris, France \\
$^3$ Department of Physics, Graduate School of Science, The University of Tokyo, Tokyo 113-0033, Japan \\
$^4$ Institute for Nano Quantum Information Electronics, The University of Tokyo, Tokyo 113-0033, Japan \\
$^5$ Optics Section, Blackett Laboratory and Institute for Mathematical Sciences, Imperial College, London SW7 2AZ, United Kingdom \\
$^6$ SUPA, Department Physics, University of Strathclyde, Glasgow, G4 0NG, United Kingdom \\}

\date{}
\begin{abstract}
In this paper for a class of symmetric multiparty pure states we consider a conjecture related to the geometric measure of entanglement:
``for a symmetric pure state, the closest product state in terms of the fidelity can be chosen as a symmetric product state''.
We show that this conjecture is true for symmetric pure states whose amplitudes are all non-negative in a computational
basis. The more general conjecture is still open.
\end{abstract}

\maketitle

\section{Introduction}
The geometric measure of entanglement, which was first proposed by Shimony \cite{S95}
and extended to multi-partite systems by Wei et al.\cite{WG03},
is one of the most natural entanglement measures for pure states in multi-partite systems,
and has applications in various different topics,
including many body physics \cite{MAVMM08,WDMV05,NDM08}, local discrimination \cite{HMMOV06}, quantum computation \cite{GFE08,MPMNDB},
condensed matter systems\cite{ODV,SOFZ},
entanglement witnesses \cite{GT09,HMMOV08} and the study of quantum channel capacities \cite{Werner02}.

Moreover, the same function (except its normalization) plays important roles in different fields of science apart from physics.
First, the geometric measure of entanglement is nothing but the {\it
injective tensor norm} itself, which appears in theory of operator
algebra \cite{DF92} and is now becoming increasingly important in
theoretical physics - particularly in quantum information \cite{P08}.
Second, in signal processing, especially in the field of multi-way data analysis, high order statistics and independent component analysis (ICA),
this function has been intensively studied under the name of {\it Rank one approximation to high order tensors} \cite{LMV00,ZG01,KR02,WA04,NW07,DL08}.

In spite of its importance, its value has only been determined for limited classes of states with large symmetries,
such as GHZ-states, generalized W-states, and certain families of stabilizer states \cite{WEGM04,MMV07,HMMOV08}.
This is because the geometric measure of entanglement is defined in terms of
the maximum fidelity between the state and a pure product state, and therefore
poses a difficult optimization problem.

In quantum information, there are several entanglement measures which attempt to quantify the ``distance'' between a
quantum state and the set of separable states.
For example, the relative entropy of entanglement and the robustness of entanglement.
For such measures, when a given entangled state is invariant under a certain group action,
we can normally choose a closest separable state as it is invariant under the same group action.
This property of measures helps us to derive an exact value of these measures for such states with group symmetry \cite{VW01}.

On the other hand, even though the geometric measure of entanglement is also defined in term of
distance from a set of product states, it does not possess this property.
In other words, for a given state, closest product states in terms of the fidelity does not inherit its group symmetry.
For example, it is known that there does not exist a translationally invariant closest product state for the translationally-invariant
GHZ-type state $\left ( \ket{01\cdots01}+\ket{10\cdots10} \right ) / \sqrt{2}$.
Nevertheless, many researchers believe that permutation-symmetry is exceptionally inherited by closest product states.
In other words,
there is one prominent conjecture:
{\it for a symmetric pure state, the maximization can be attained by a symmetric product state}.
If this conjecture were to be true, it could vastly reduce the computation
of the geometric measure of entanglement for a symmetric pure state.
To our knowledge this conjecture first appeared in the paper \cite{WG03}, where
it was used in order to propose an analytical formula
for the geometric measure of entanglement for GHZ-states and W-states.
Subsequently this conjecture, and a stronger version
(in which "symmetric" is replaced by "translationally invariant"), was
used in calculations of the geometric measure for states of many body systems \cite{WDMV05,NDM08}.
In the paper \cite{HMMOV08} the authors attempted to prove this conjecture.
However, it remained an open problem.

In this paper, we give a proof of this conjecture for a restricted but large class of symmetric states:
symmetric states whose amplitudes are all non-negative in a given computational basis.
This class involves many famous states like GHZ-states, W-states, Dicke states and also superposition of
these states involving only non-negative coefficients. Our result is hence sufficient to give mathematical
rigour to the computations of the entanglement of types of symmetric pure state
that were presented in \cite{WG03,WEGM04,HMMOV08}.

\section{Definitions and Main result}
Throughout this paper we will treat only finite dimensional Hilbert spaces obtained
from tensor products of a single space $\Hi$.
We start with the definition of the geometric measure of entanglement:
\begin{Definition}
For a state $\ket{\Psi}$, the {\it geometric measure of entanglement} is defined as
\begin{eqnarray} \label{Eqn: Def Eg}
E_g(|\psi\rangle) = \min_{|\Phi\rangle \in {\rm Pro} (\Hi ^{\otimes n})} -\log_2 (|\langle \Phi | \psi \rangle |^2),
\end{eqnarray}
where ${\rm Pro}(\Hi^{\otimes n} )$ is the set of product states on $\Hi^{\otimes n}$.
\end{Definition}
This is the distance between state $|\psi\rangle$ and the
closest product state $|\Phi\rangle$ in terms of fidelity,
and has operational significance in several directions \cite{MAVMM08,WDMV05,NDM08,HMMOV06,GFE08,MPMNDB,GT09,HMMOV08,   Werner02}.
The measure can be extended to the mixed state case
in a natural way via the convex roof method \cite{WG03}.
Several properties of this measure have already been studied and we know
that it has many of the nice properties one might
require from an entanglement measure \cite{S95,WG03,WEGM04,PV}.

The main result of this paper is the following theorem:
\begin{Theorem}
If there exists a basis $ \left \{ \ket{i} \right \}_{i = 1}^{\dim \Hi}$ of $\Hi$
such that a permutation invariant pure state $\ket{\Psi} \in \mathcal{S}_n$ satisfies
$\bra{\Psi}\left(\ket{i_1} \otimes \cdots \otimes \ket{i_n} \right) \ge 0$
for all $i_1, \cdots, i_n$, then a closest product state $\ket{\Phi}$ may be found in the symmetric Hilbert space. More precisely,
\begin{equation}\label{eq: Lemma 1}
E_g(\ket{\Psi})=-\log _2 \max _{\ket{\phi}\in \Hi} |\bra{\phi}^{\otimes n}\ket{\Psi}|^2,
\end{equation}
where $\mathcal{S}_n$ is the symmetric subspace of $\Hi^{\otimes n}$.
In addition we may choose an optimal state $\ket{\phi}$ such that it satisfies $\braket{i}{\phi} \ge 0$ for all $i$.\\
\end{Theorem}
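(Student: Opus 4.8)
The plan is to turn the definition of $E_g$ into the maximization of the fidelity $\max_{\ket{\phi_1},\ldots,\ket{\phi_n}} |\bra{\phi_1}\otimes\cdots\otimes\bra{\phi_n}\ket{\Psi}|$ over all (a priori complex) product states, and to show that under the non-negativity hypothesis this maximum is already attained on the diagonal $\ket{\phi_1}=\cdots=\ket{\phi_n}=\ket{\phi}$ with $\braket{i}{\phi}\ge 0$. Writing $\Psi_{i_1\cdots i_n}=\braket{i_1\cdots i_n}{\Psi}\ge 0$ and $c^{(j)}_i=\braket{i}{\phi_j}$, the overlap is $\sum_{i_1,\ldots,i_n}\overline{c^{(1)}_{i_1}}\cdots\overline{c^{(n)}_{i_n}}\,\Psi_{i_1\cdots i_n}$. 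First I would use the triangle inequality together with $\Psi_{i_1\cdots i_n}\ge 0$: replacing each $\ket{\phi_j}$ by the state with amplitudes $|c^{(j)}_i|$ leaves the norm unchanged and can only increase the modulus of the overlap. Hence the complex optimization reduces, with no loss, to the real optimization of the symmetric multilinear form $F(v^{(1)},\ldots,v^{(n)})=\sum \Psi_{i_1\cdots i_n}\, v^{(1)}_{i_1}\cdots v^{(n)}_{i_n}$ over non-negative unit vectors $v^{(j)}\in\mathbb{R}^{\dim\Hi}$, and the same reduction applies to the diagonal problem $\max_v F(v,\ldots,v)$.

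The heart of the argument is then the purely real statement that, for a symmetric multilinear form, the maximum over independent unit arguments equals the maximum over the diagonal. Because the coefficients $\Psi_{i_1\cdots i_n}$ are non-negative, for arbitrary sign-unrestricted real vectors one has $|F(x^{(1)},\ldots,x^{(n)})|\le F(|x^{(1)}|,\ldots,|x^{(n)}|)$, where $|x|$ denotes the entrywise modulus, so the supremum of $|F|$ over real unit tuples is already attained on non-negative tuples and coincides with $M:=\max_{v^{(j)}\ge 0} F$; the same holds for the diagonal supremum of $|F(x,\ldots,x)|$. I would then invoke the classical theorem of Banach on symmetric multilinear forms on a real Hilbert space, which asserts exactly that the norm of $F$ (the sup over independent arguments) equals the norm of its associated diagonal polynomial. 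This yields $M=\max_{v\ge 0}F(v,\ldots,v)$, i.e.\ Eq.~(\ref{eq: Lemma 1}), and since $F(|v|,\ldots,|v|)\ge F(v,\ldots,v)$ the optimal $\ket{\phi}$ may be taken with $\braket{i}{\phi}\ge 0$.

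If one prefers a self-contained route avoiding Banach's theorem, I would prove the real diagonalization directly by an induction built on the bipartite case: fixing all but two arguments of $F$ produces a symmetric, entrywise non-negative matrix $A$, for which $\max_{x,y\ge 0,\,\|x\|=\|y\|=1} x^{T}Ay$ equals the largest singular value; by Perron--Frobenius this is the Perron eigenvalue and is attained with $x=y$ equal to the non-negative Perron eigenvector. This pairwise step never decreases $F$ and replaces two arguments by a common vector.

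The step I expect to be the main obstacle is upgrading this pairwise equalization to a genuinely global one: a single application of the matrix lemma equalizes one pair but may destroy a previously arranged equality, so making all $n$ arguments simultaneously equal requires either an iteration whose convergence to the full diagonal must be controlled by a compactness or monotonicity argument, or the clean shortcut of invoking Banach's theorem after the reduction to real vectors. It is precisely here that the non-negativity hypothesis is essential: it is what permits the passage from complex to real vectors, and the general (complex) conjecture remains open because Banach's equality has no exact analogue over $\mathbb{C}$, where the polarization constants relating a symmetric form to its diagonal are strictly larger than one.
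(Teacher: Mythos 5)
Your proposal is correct, but it reaches the conclusion by a genuinely different route from the paper. The first step --- using the non-negativity of $\Psi_{i_1\cdots i_n}$ and the triangle inequality to replace each complex $\ket{\phi_j}$ by the entrywise-modulus state, thereby reducing the problem to maximizing the real multilinear form $F$ over non-negative unit vectors --- is exactly the paper's Eq.~(\ref{eq1: proof lemmma 1}). After that the two arguments diverge. You invoke Banach's 1938 theorem that a continuous symmetric $n$-linear form on a real Hilbert space attains its norm on the diagonal, which immediately gives $\max_{v^{(j)}\ge 0}F=\max_{\|x\|=1}|F(x,\dots,x)|=\max_{v\ge 0}F(v,\dots,v)$; this is clean and legitimate, and it correctly delivers the non-negativity of the optimal $\ket{\phi}$ via $F(|v|,\dots,|v|)\ge F(v,\dots,v)$. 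The paper instead builds a self-contained argument: an induction on $n$ whose base case is Perron--Frobenius, plus an explicit iterative symmetrization in which, at each step, the pair of local states with the largest mutual angle is replaced by its normalized average. The convergence of that iteration is exactly the obstacle you identify in your ``self-contained'' variant, and the paper resolves it with its Lemma~1 (a Cauchy--Schwarz equality analysis showing that the \emph{average} $(u+v)/\|u+v\|$ of an optimal non-negative pair is again optimal), which yields a geometric halving of the maximal pairwise angle and hence convergence to a fully symmetric product state. In short, Banach's theorem buys you a two-line finish at the cost of a nontrivial external citation, while the paper's route is longer but elementary and exhibits an explicit convergent symmetrization; your recognition that naive pairwise Perron--Frobenius equalization does not by itself control global convergence matches the paper's own remark on why Lemma~1 is needed.

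One peripheral caveat: your closing assertion that the complex polarization constants of a Hilbert space are strictly larger than one is not the right explanation for why the general conjecture was open --- for Hilbert spaces the diagonal-norm equality is not known to fail over $\mathbb{C}$ (and indeed the unrestricted symmetric-state conjecture was subsequently proven). This does not affect the validity of your proof of the stated theorem, which only uses the real case of Banach's theorem.
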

As we have already mentioned, the preconditions of this theorem are satisfied by a large class of symmetric state,
including GHZ and W states. The theorem hence gives a mathematically rigorous proof for the calculations in \cite{WG03,WEGM04,HMMOV08}.
Intriguingly, an identical result to Theorem 1 has also been independently proven by T-C. Wei and S. Severini using methods from the theory of permanents
\cite{Simone}. It will be of interest to identify whether there are hidden similarities to the proofs, or whether
they are truly distinct.

\section{Proof of Main result}

In order to prove Theorem 1, we will need to utilize
Perron-Frobenius Theorem \cite[Theorem 8.3.1]{HJ} and another lemma,
Lemma 1, which we now present and prove.
We use the following notation. First,
for a vector $u \in \mathbb{R}^d$ expressed in a certain privileged basis, we say that $u$ is non-negative if all elements of $u$ are
non-negative in that basis. We denote this using
the notation $u \ge 0$. We use a similar notation for complex vector spaces: a pure state $\ket{v} \in \Hi$ is said to be {\bf non-negative}
if $\braket{i}{v} \ge 0$ for a privileged basis $\{ \ket{i} \}_i$ which we will define shortly.
The non-negativity of a pure state will be denoted by $\ket{v} \ge 0$.\\

For a state $\ket{\Psi} \in \Hi ^{\otimes 2}$,
a product basis $\{\ket{i_1} \otimes \ket{i_2}\}$ gives a natural isomorphism between all states $\ket{\Psi}$ on $\Hi \otimes \Hi$
and all $d \times d$ complex matrices $\Psi$ satisfying $\Tr \Psi ^{\dagger} \Psi=1$, where $d := \dim \Hi$.
If $\{\ket{i_1} \otimes \ket{i_2}\}$ satisfies $\bra{\Psi}(\ket{i_1} \otimes \ket{i_2}) \in \mathbb{R}$,
then the state $\ket{\Psi}$ corresponds to a {\it real matrix} $\Psi$. If such a bipartite state is permutationally
invariant then the matrix will also be symmetric.
The real symmetricity of $\Psi$ implies that 
its largest eigen value $\lambda_1$ is also a singular value so that
\[
\lambda_1 = \sup_{\| w\|=1} w^\dagger \Psi w = 
\sup_{\|w\|=1}\|\Psi w\|.\]
Since $\Psi$ has non-negative elements, the (extended) 
Perron-Frobenius Theorem \cite[Theorem 8.3.1]{HJ} implies that
the largest singular value equals the largest eigen-value 
and the corresponding eigen-vector can be chosen to have non-negative 
elements so that
\begin{equation}\label{condition: Lemma B}
\lambda_1=\sup\{w^T\Psi w|w_k \ge 0,w= 1\}
=\sup\{u^T\Psi v| u_k, v_k \ge 0,\|u\|=\|v\|=1\}.
\end{equation}

Here, we give a lemma:
\begin{Lemma}
If $u$ and $v$ are normalized vectors with non-negative elements and 
$\lambda_1 = u^T\Psi v$, 
then
$w = \frac{u + v}{\|u + v\|}$ is an eigen-vector of $\lambda_1$ 
with non-negative elements.
\end{Lemma}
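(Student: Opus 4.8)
The plan is to upgrade the scalar optimality condition $\lambda_1 = u^T\Psi v$ into the pair of vector identities $\Psi v = \lambda_1 u$ and $\Psi u = \lambda_1 v$, after which the eigen-equation for $u+v$ follows by simple linearity. The non-negativity of $w$ requires almost no work: since $u \ge 0$ and $v \ge 0$, their sum is non-negative, and because $u,v$ are unit vectors with non-negative overlap we have $\|u+v\|^2 = 2 + 2\,u^T v \ge 2 > 0$, so the normalization is well defined and $w \ge 0$. Hence the only real content is the claim that $w$ lies in the $\lambda_1$-eigenspace.

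The key step is a tight Cauchy--Schwarz argument. Writing $\lambda_1 = u^T(\Psi v)$ and applying Cauchy--Schwarz gives $\lambda_1 \le \|u\|\,\|\Psi v\| = \|\Psi v\|$, while the identification of $\lambda_1$ as the largest singular value of $\Psi$ (already established above via the Perron--Frobenius theorem) gives $\|\Psi v\| \le \lambda_1 \|v\| = \lambda_1$. The resulting chain $\lambda_1 \le \|\Psi v\| \le \lambda_1$ must therefore collapse to equalities everywhere. The equality case of Cauchy--Schwarz forces $\Psi v$ to be a scalar multiple of $u$, and matching the known value $u^T\Psi v = \lambda_1$ fixes that scalar, yielding $\Psi v = \lambda_1 u$.

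Next I would use the symmetry of $\Psi$. Because $v^T\Psi u = u^T\Psi^T v = u^T\Psi v = \lambda_1$, the very same Cauchy--Schwarz argument with the roles of $u$ and $v$ interchanged produces $\Psi u = \lambda_1 v$. Adding the two relations gives $\Psi(u+v) = \lambda_1 u + \lambda_1 v = \lambda_1 (u+v)$, and dividing through by $\|u+v\|$ shows that $w$ is an eigenvector of $\Psi$ with eigenvalue $\lambda_1$, as required.

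The proof is short, so there is no single hard obstacle; the step demanding the most care is the equality analysis of Cauchy--Schwarz, where one must confirm that it delivers exactly $\Psi v = \lambda_1 u$ with the correct positive constant, rather than mere proportionality. This depends on two facts that are already in hand: that $u^T\Psi v = \lambda_1 > 0$, and that $\lambda_1$ serves simultaneously as the top eigenvalue and the top singular value of $\Psi$, the latter being precisely what guarantees $\|\Psi v\| \le \lambda_1$ and thereby closes the inequality chain.
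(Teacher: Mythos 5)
Your proposal is correct and follows essentially the same route as the paper's own proof: the tight Cauchy--Schwarz chain $\lambda_1 = u^T\Psi v \le \|\Psi v\| \le \lambda_1$ forcing $\Psi v = \lambda_1 u$, the symmetric argument giving $\Psi u = \lambda_1 v$, and addition of the two relations. The only additions are your explicit remarks that $\|u+v\| > 0$ and that $w \ge 0$, which the paper leaves implicit.
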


\begin{proof}
By the Cauchy-Schwarz ineqaulity
\[
\lambda_1 = u^T\Psi v \le \| \Psi v\|\le \lambda_1.
\]
Therefore, equality holds above and by the conditions for equality in 
Cauchy-Schwarz, we must have $cu = \Psi v$. 
Then, the conditions on $u$ and $v$ imply $c = \lambda$.
Applying the same argument to $(\Psi u)^T v$,
we obtain $\Psi u = \lambda v$.
So, 
the relation $\Psi \frac{u + v}{\|u + v\|} = \lambda \frac{u + v}{\|u + v\|}$
 holds.
\end{proof}

Now, we are ready to prove Theorem 1.

\noindent{\it Proof of Theorem 1:~}
We start the proof by noting two important facts:
Firstly, if the statement of this theorem is valid,
then the same statement is valid for a non-normalized state -
the definition of the geometric measure of entanglement $E_g(\ket{\Psi})$ can be
easily extended to a non-normalized state.
Secondly, suppose that the assumption of the theorem is valid.
Then because the amplitudes of $\Psi$ are non-negative we can easily see that
\begin{eqnarray}\label{eq1: proof lemmma 1}
&\quad & \max _{\ket{\Phi} \in {\rm Pro}(\Hi ^{\otimes n})} |\braket{\Phi}{\Psi}| \nonumber \\
&=& \max _{\ket{\Phi}\in {\rm Pro}(\Hi ^{\otimes n})}
\Big\{ \braket{\Phi}{\Psi} \ \Big | \  \bra{i}\otimes \cdots \otimes \bra{i_n} \ket{\Phi} \ge 0, \forall i_1, \cdots, i_n \Big\},
\end{eqnarray}
where ${\rm Pro}(\Hi ^{\otimes n})$ is the set of all product states on $\Hi ^{\otimes n}$.
So, we just need to consider the optimization problem in the right hand side of the above equation.
We will prove Theorem 1 by induction with respect to a number $n$ of tensor copies of the Hilbert space.

For $n=2$, by means of the natural correspondence between bipartite states and matrices,
we derive
\begin{eqnarray}
&\quad & \max _{\ket{u}, \ket{v} \in \Hi} \left \{ \bra{u} \otimes \bra{v} \ket{\Psi} \ \Big | \  \braket{i}{u} \ge 0, \ {\rm and} \ \braket{i}{v} \ge 0, \ \forall i \right \} \nonumber \\
& = & \max _{u, v \in \mathbb{R}^d } \left \{ u^T \Psi v \ \Big | \  u, v \ge 0 \right \}.
\end{eqnarray}
Thus, in the case $n=2$ equation (\ref{eq: Lemma 1}) follows directly from
(\ref{condition: Lemma B}).

Suppose that for all $n \le k$ the statement of this theorem is valid,
and $\ket{\Psi} \in \Hi ^{\otimes k+1}$ satisfies the assumption of the theorem.
Then, since $\ket{\Psi}$ is non-negative, it satisfies Eq.(\ref{eq1: proof lemmma 1}).
Thus, there exists a non-negative product state $\ket{a_1} \otimes \cdots \otimes \ket{a_{k+1}} \ge 0$
satisfying
\begin{equation}
\bra{a_1} \otimes \cdots \otimes \bra{a_{k+1}}\ket{\Psi}=\max _{\ket{\Phi} \in {\rm Pro}(\Hi ^{\otimes k+1})} |\braket{\Phi}{\Psi}|.
\end{equation}
We now define a non-normalized state $\ket{\Psi'_0} \in \Hi ^{\otimes k}$ as $\ket{\Psi'_0} \stackrel{\rm def}{=} I_{\Hi}^{\otimes k}\otimes \braket{a_{k+1}}{\Psi}$ - clearly this state is also non-negative.
Then, this state satisfies
\begin{equation}\label{eq: k partite, lemma 1}
\bra{a_1} \otimes \cdots \otimes \bra{a_{k}}\ket{\Psi'_0}=\max _{\ket{\Phi} \in {\rm Pro}(\Hi ^{\otimes k+1})} |\braket{\Phi}{\Psi}|.
\end{equation}
Now suppose that there exists a non-negative product state $\ket{a'_1} \otimes \cdots \otimes \ket{a'_{k}} \ge 0$
satisfying
$\bra{a'_1} \otimes \cdots \otimes \bra{a'_{k}}\ket{\Psi'_0} > \bra{a_1} \otimes \cdots \otimes \bra{a_{k}}\ket{\Psi'_0}$.
Then, a non-negative product state $\ket{a'_1} \otimes \cdots \otimes \ket{a'_{k}} \otimes \ket{a_{k+1}} \ge 0$ satisfies
$\bra{a'_1} \otimes \cdots \otimes \bra{a'_{k}}\otimes \bra{a_{k+1}} \ket{\Psi} > \max _{\ket{\Phi} \in {\rm Pro}(\Hi ^{\otimes k+1})} |\braket{\Phi}{\Psi}|$.
However, since $\bra{a'_1} \otimes \cdots \otimes \bra{a'_{k}}\otimes \bra{a_{k+1}} \in {\rm Pro}(\Hi ^{\otimes k+1})$,
this would be a contradiction. Hence we obtain
\begin{equation}\label{eq: k partite optimal, lemma 1}
\bra{a_1} \otimes \cdots \otimes \bra{a_{k}}\ket{\Psi'_0}=\max _{\ket{\Phi'} \in {\rm Pro}(\Hi ^{\otimes k})} |\braket{\Phi'}{\Psi'_0}|.
\end{equation}
We now impose the assumption of the induction, that there exists a state $\ket{v_0} \ge 0$ such that
$\bra{v_0}^{\otimes k} \ket{\Psi'_0} = \bra{a_1} \otimes \cdots \otimes \bra{a_{k}}\ket{\Psi'_0} =
\max _{\ket{\Phi'} \in {\rm Pro}(\Hi ^{\otimes k})} |\braket{\Phi'}{\Psi'_0}|$.
From Eq. (\ref{eq: k partite, lemma 1}), we derive
\begin{equation}
\bra{v_0}^{\otimes k} \otimes \bra{a_{k+1}} \ket{\Psi'_0}=\max _{\ket{\Phi} \in {\rm Pro}(\Hi ^{\otimes k+1})} |\braket{\Phi}{\Psi}|
\end{equation}
Now, we define a finite sequence of non-negative states $\{ \ket{C_p^{(0)}}\}_{p=1}^{k+1}$ as
\begin{eqnarray}
 &&  \ket{C_p^{(0)}} := \ket{v_0} \,\,\,\,\,\,\,\,\,\,\,\,\,\,\, {\rm for} \,\,\,\,\, 1 \le p \le k \nonumber \\
 &&  \ket{C_{k+1}^{(0)}} := \ket{a_{k+1}}. \nonumber
\end{eqnarray}
By utilising procedure detailed below, we will use this definition as a starting point for the construction of an infinite sequence of sets of non-negative states
$\left \{ \{ \ket{C_p^{(i)}} \}_{p=1}^{k+1}  \right \}_{i=0}^{\infty}$ satisfying
\begin{equation}\label{eq: optimality of all sequences}
\bra{C_1^{(i)}} \otimes \cdots \otimes \bra{C_{k+1}^{(i)}}\ket{\Psi} =\max _{\ket{\Phi} \in {\rm Pro}(\Hi ^{\otimes k+1})} |\braket{\Phi}{\Psi}|
\end{equation}
for all non-negative integers $i$.
We note that because of the permutation symmetry of $\ket{\Psi}$, there is no significance to the order imposed by $p$.
$\{ \ket{C_p^{(i+1)}}\}_{p=1}^{k+1}$ is defined from $\{ \ket{C_p^{(i)}} \}_{p=1}^{k+1}$ as follows:
We choose a couple of states $\{ \ket{C_{\alpha}^{(i)}}, \ket{C_{\beta}^{(i)}}\}$
from $\{\ket{C_p^{(i)}}\}_{p=1}^{k+1}$ such that
their inner product $\braket{C_{\alpha}^{(i)}}{C_{\beta}^{(i)}}$ is
the least amongst the inner products of all pairs of states selected from $\{ \ket{C_p^{(i)}}\}_{p=1}^{k+1}$.
Then, $\ket{C_{\alpha}^{(i+1)}}$ and $\ket{C_{\beta}^{(i+1)}}$ are defined as
\begin{equation}\label{eq 8 oct}
\ket{C_{\alpha}^{(i+1)}} = \ket{C_{\beta}^{(i+1)}} :=
{\ket{C_{\alpha}^{(i)}} + \ket{C_{\beta}^{(i)}} \over \parallel \ket{C_{\alpha}^{(i)}} + \ket{C_{\beta}^{(i)}}  \parallel}.
\end{equation}
For all other $p \neq \alpha, \beta$, we define $\ket{C_{p}^{(i+1)}}$ as $\ket{C_{p}^{(i+1)}}=\ket{C_{p}^{(i)}}$.
We need to show that the set of non-negative states $\{ \ket{C_p^{(i+1)}} \}_{p=1}^{k+1}$ defined as above actually
satisfies equation (\ref{eq: optimality of all sequences}) for all $i$.
From the permutation symmetry of $\ket{\Psi}$, we can set $\alpha =1$ and $\beta=2$ without losing any generality.
Then, we define a non-negative non-normalized bipartite state $\ket{\Psi'_i} \in \Hi ^{\otimes 2}$ as
$\ket{\Psi'_i}:=I_{\Hi}^{\otimes 2} \otimes \bra{C_3^{(i)}} \otimes \cdots \bra{C_{k+1}^{(i)}}\ket{\Psi}$.
By the same discussion we used to derive equation (\ref{eq: k partite optimal, lemma 1}),
we can conclude that
\begin{equation}
\bra{C_1^{(i)}}\otimes \bra{C_2^{(i)}} \ket{\Psi'_i}= \max _{\Phi \in {\rm Pro}(\Hi ^{\otimes 2})} |\braket{\Phi}{\Psi'_i}|.
\end{equation}
By means of Lemma 1, we obtain
$\bra{C_1^{(i+1)}}\otimes \bra{C_2^{(i+1)}} \ket{\Psi'_i}=\bra{C_1^{(i)}}\otimes \bra{C_2^{(i)}} \ket{\Psi'_i}=
\max _{\Phi \in {\rm Pro}(\Hi ^{\otimes 2})} |\braket{\Phi}{\Psi'_i}|$.
This means that $\{ \ket{C_p^{(i+1)}}\}_{p=1}^{k+1}$ satisfies equation (\ref{eq: optimality of all sequences}) for all $i$.
We are now at a stage where we have a symmetrisation procedure that produces a sequence of product states that
all have the maximal inner product with the entangled state $\Psi$. We must however show that this sequence of
product states converges to a symmetric product state. This is the step that we now address.

%
%

Suppose ${\rm ang} \left ( \ket{u}, \ket{v} \right ) $ is the angle between two single-party states $\ket{u}$ and $\ket{v}$.
We define $\theta _i$ as $\theta _i := \max_{1 \le p, q \le k+1} {\rm arg}( \ket{C_p^{(i)}}, \ket{C_q^{(i)}})$.
Then, by the definition of $\{ \ket{C_p^{(i)}}\}_{p=1}^{k+1}$,
we can easily see $\theta _{i+1} \le \theta _{i}$. Moreover, we can prove $\lim _{i \rightarrow \infty} \theta _{i} =0 $ as follows.

Suppose $\{ \ket{C_p^{(i)}}\}_{p=1}^{k+1}$ satisfies
$\ket{C_p^{(i)}}=\ket{C_q^{(i)}}=\ket{u}$ for all $1 \le p, q \le \xi$, and
$\ket{C_p^{(i)}}=\ket{C_q^{(i)}}=\ket{v}$ for all $\xi+1 \le p, q \le k+1 $.
Without loss of generality, we can assume $\xi \geq (k+1)/2$.
Defining $\eta$ as $\eta := k+1 - \xi$, we can easily see that $\{ \ket{C_p^{(i+\eta)}}\}_{p=1}^{k+1}$ satisfies
$\ket{C_p^{(i+\eta)}}=\ket{C_q^{(i)}}=\ket{u}$ for all $1 \le p, q \le \xi - \eta$, and
$\ket{C_p^{(i+\eta)}}=\ket{C_q^{(i)}}=\ket{u}+ \ket{v} / \| \ket{u}+ \ket{v} \|$ for all $\xi - \eta +1 \le p, q \le k+1 $.
Hence, if $k+1$ is an even number and if $\xi = (k+1)/2$, then $\theta _{i+\eta}=0$.
Otherwise, $\theta _{i+\eta}=\theta _{i}/2$.
Therefore, if $\{ \ket{C_p^{(i)}}\}_{p=1}^{k+1}$ satisfies
$\ket{C_p^{(i)}}=\ket{C_q^{(i)}}=\ket{u}$ for all $1 \le p, q \le \xi$, and
$\ket{C_p^{(i)}}=\ket{C_q^{(i)}}=\ket{v}$ for all $\xi+1 \le p, q \le k+1 $,
then, $\theta _{i+f} \le \theta_{i}/2$,
where $f$ is the largest integer smaller than $k+1/2$.
Since  $\{ \ket{C_p^{(0)}}\}_{p=1}^{k+1}$ actually satisfies the above condition,
we derive $\theta _{h  \cdot f } \le \theta _0 /2^h$ for all positive integers $h$.
Thus, we can conclude $\lim _{i \rightarrow \infty} \theta _{i} =0 $.

A sequence of non-negative states $\{ \ket{C_p^{(i)}}\}_{i=0}^{\infty}$ hence converges to the same non-negative state
$\ket{C^{\infty}}:=\lim _{i \rightarrow \infty } \ket{C_p^{(i)}}$ without depending on $p$.
Since Eq.(\ref{eq: optimality of all sequences}) is valid for all non-negative integers $i$,
by means of the continuity of the inner product,
we obtain $\bra{C^{\infty}}^{\otimes k+1}\ket{\Psi}=\max _{\ket{\Phi} \in {\rm Pro}(\Hi ^{\otimes k+1})} |\braket{\Phi}{\Psi}|$.
That is, the statement is valid for $n=k+1$.
Therefore, by induction with respect of $n$, we have proved the statement of Theorem 1.
\hspace*{\fill}~\QED\par\endtrivlist\unskip

Here, we add one remark concerning the necessity of our Lemma 1.
Indeed from the Perron-Frobenius (PF) Theorem\cite[Theorem 8.3.1]{HJ}, 
we can immediately show the existence of a non-negative $w_0$ 
which satisfies Eq.(\ref{condition: Lemma B}).
However, this is not enough to prove the theorem for the following reasons:
for fixed $\alpha\neq \beta$,
the PF Theorem does enable symmetrization of the closest product state
so that a new closest product state
has the form of
$\ket{C_{\alpha}^{(i+1)}} = \ket{C_{\beta}^{(i+1)}}$.
However, since the symmetrization on a pair of particles $\alpha$ and $\beta$
in principle break the symmetry previously established among other pairs of particles,
simple application of the PF theorem cannot conclude that there
exists a sequence of closest product states converging to a completely symmetric state.
Therefore, in order to demonstrate the convergence
we really need a relation between
$\{\ket{C_{\alpha}^{(i)}}, \ket{C_{\beta}^{(i)}} \}$
and $\{\ket{C_{\alpha}^{(i+1)}}, \ket{C_{\beta}^{(i+1)}} \}$, as in Lemma 1.
Therefore, 
we need Lemma 1 to symmetrize a closest product state and
Lemma 1 is essential for our proof.


\section{Discussion and Conclusions}

In last part of this paper we give several comments on the theorem.
A stronger (and still unproven) version of Theorem could
still be valid, without the assumption of the ``{\it non-negativity}'' of
the symmetric state $\ket{\Psi}$. As we have mentioned in the beginning of the paper,
this stronger conjecture first appeared in Wei et al.'s paper \cite{WG03};
they used this conjecture in order to propose an analytical formula
for the geometric measure of entanglement for GHZ-states and W-states.
While we have not been able to prove the stronger version,
the proof presented here of Theorem 1 applies to W and GHZ states, as they can
be chosen to be non-negative in the sense that we require.

In fact, all specific instances of the geometric measure
calculated in our previous paper \cite{HMMOV08} concern such `non-negative' states,
and so the weaker version of the conjecture, Theorem 1, proved here is sufficient for
those cases (See Section I\hspace{-.1em}I\hspace{-.1em}I.B of  \cite{HMMOV08}).

This weaker version of the conjecture proven above can be useful for calculations of
the geometric measure of entanglement of various multi-partite systems.
For instance, recently several researches have investigated possible connections between the behavior of the geometric measure of entanglement
and existence of quantum phase transition in natural physical systems \cite{WDMV05,NDM08}.
However, it is generally impossible to calculate a value of the geometric measure of entanglement
for such large systems because the definition of the geometric measure involves a large
optimization problem over all product states.
Theorem 1 above provides a way to reduce the size of the optimization problem
in those cases where the state is known to be symmetric and also non-negative.
Actually, in almost all the calculations to date of the geometric measure for ground states,
the possibility of this type of reduction has been assumed.
This paper gives a mathematically rigorous proof of
this type of reduction for a restricted subset of pure states (``{\it the set all non-negative states}'') on a symmetric Hilbert space.

Finally, we mention the possibility of the extension of this theorem for a larger subset of symmetric states.
The logic in the proof of Theorem 1 strongly depends on the reduction of the optimization problem
described by equation (\ref{eq1: proof lemmma 1}). However, a similar reduction is no longer trivial for a state $\ket{\Psi}$
having negative amplitudes.
Moreover, when the state $\ket{\Psi}$ has complex amplitudes, 
Lemma 1 is not valid (although
it is of course clear that applying a local unitary $U \otimes U \otimes ...$ to a `non-negative' state gives
a `non-positive' state for which the conjecture is true).
Hence supplying either a proof or a counterexample to the original stronger statement of Theorem 1 \cite{HMMOV08} is
an interesting open problem.

\section*{Acknowledgments}

We are very grateful to Robert H\"{u}bener and Otfried G\"{u}hne for pointing out the error in
the argument presented in our original paper \cite{HMMOV08}. We also thank Tzu-Shieh Wei
and Simone Severini for discussions.
We are obliged to the editor for pointing out a very simple proof of Lemma 1.
MO was supported by the EPSRC grant
EP/C546237/1, and the European Union Integrated Project Qubit Applications (QAP).
MM acknowledges support from JST and NanoQuine.
MH supported by a Grant-in-Aid for Scientific Research
in the Priority Area `Deepening and Expansion of Statistical Mechanical Informatics (DEX-SMI)', No. 18079014
and a MEXT Grant-in-Aid for Young Scientists (A) No. 20686026. SV acknowledges
support from EU-STREP `Corner' and Quisco.

\end{document}